\newcommand{\CollCost}{\mathcal{C}\xspace}
\newcommand{\defn}[1]{\textbf{\emph{#1}}}
\newcommand{\poly}[1]{{\texttt{poly}({#1})}}
\newcommand{\mainAlgorithm}{\textsc{Aim-High}\xspace}
\newcommand{\mainAlgorithmAcrynom}{\textsc{AH}\xspace}
\newcommand{\wakeup}{{wakeup}\xspace}
\newcommand{\whp}{w.h.p.\xspace}
\newcommand{\Whp}{W.h.p.\xspace}
\newcommand{\tunableParam}{\epsilon\xspace}
\newcommand{\currWin}{w_{\mbox{\tiny cur}}\xspace}
\newcommand{\con}{\texttt{Con}(t)\xspace}
\newcommand{\conSquare}{\texttt{Con}^2(t)\xspace}
\begin{document}

\title{A Gentle Wakeup Call\\ Symmetry Breaking with Less Collision Cost}
\renewcommand{\shorttitle}{A Gentle Wakeup Call}

\author{Umesh Biswas}
\orcid{0009-0000-2842-5527}
\email{ucb5@msstate.edu}
\affiliation{%
\department{Department of Computer Science and Engineering}
  \institution{Mississippi State University}
  \city{Mississippi State}
  \state{Mississippi}
  \country{USA}
}

\author{Maxwell Young}
\orcid{0000-0002-5251-8595}
\email{myoung@cse.msstate.edu}
\affiliation{%
\department{Department of Computer Science and Engineering}
  \institution{Mississippi State University}
  \city{Mississippi State}
  \state{Mississippi}
  \country{USA}
  \vspace{1cm}
}

\begin{abstract}
The wakeup problem addresses the fundamental challenge of symmetry breaking. Initially,  $n$ devices share a time-slotted multiple access channel, which models wireless communication. A transmission succeeds if exactly one device sends in a slot; if two or more transmit, a collision occurs and none succeed. The goal is to achieve a single successful transmission efficiently.

Prior work primarily analyzes latency---the number of slots until the first success. However, in many modern systems, each collision incurs a nontrivial delay, $\CollCost$, which prior analyses neglect. Consequently, although existing algorithms achieve polylogarithmic-in-$n$ latency, they still suffer a delay of $\Omega(\CollCost)$ due to collisions.

Here, we design and analyze a randomized wakeup algorithm, \mainAlgorithm. When $\CollCost$ is sufficiently large with respect to $n$, \mainAlgorithm has expected latency and expected total cost of collisions that are nearly $O(\sqrt{\CollCost})$; otherwise,  both quantities are $O(\poly{\log n})$. Finally, for a well-studied class of algorithms, we establish a trade-off between latency and expected total cost of collisions.
\end{abstract}

\maketitle

\section{Introduction} 

In the wakeup problem, devices are either \defn{active} ({\it awake}) or \defn{inactive} ({\it asleep}). Each active device has a packet to send on a  shared channel, while inactive devices are limited to listening. The first successful transmission serves as a “wakeup call”, awakening all inactive devices and resolving the problem \cite{banicescu2024survey}.

Prior work on this problem has focused on the time until the first success, which is typically referred to as the \defn{latency}. Minimizing this metric is challenging, since in any unit of time, the shared channel can only accommodate a single sender; if two or more devices send simultaneously, the result is a \defn{collision} where none of the corresponding packets are successfully sent. Sending on the channel is performed in a distributed fashion; that is, {\it a priori} there is no central scheduler or coordinator. 

Under the standard cost model for wakeup, each collision wastes a single slot. However, in many settings, a collision {\it can consume much  more than a single slot}. Therefore, collisions may significantly impact the performance of modern systems.

In this work, we address both latency and the cost of collisions required to solve the wakeup problem.  While  existing algorithms give strong latency guarantees, they result in a large number of collisions (see  Section \ref{sec:tech-overview}), and  addressing these two metrics in tandem requires new ideas.\medskip

\noindent{\bf Why Should We Care About Collisions?} The delay incurred by a collision can vary across different systems. As an example, consider WiFi (IEEE 802.11) networks; here, both sending a packet and the delay from a collision consume time, say some large value $\CollCost$, proportional to the packet size \cite{anderton:windowed}. In this setting, wakeup incurs an unavoidable delay of $\CollCost$, since a single success is required. Yet, reducing the delay caused by collisions is still important. If this delay is, say, $X$, then $\CollCost/(\CollCost + X)$ provides a measure of how much time is wasted due to collisions. So, if $X=\Theta(\CollCost)$, then this ratio is suboptimal by a constant factor. However, if $X=\Theta(\sqrt{\CollCost})$, then $\CollCost/(\CollCost + \Theta(\sqrt{\CollCost})) \geq 1 - \Theta(1/\sqrt{\CollCost})$; that is, the ratio approaches the optimal value of $1$ as $\CollCost$ grows large.

More generally, over long-range links, where the propagation delays are large, a collision means devices can wait a long round-trip time before realizing the transmission has failed. Also, in intermittently-connected mobile wireless networks \cite{zhang2006routing}, failed communication due to a collision may result in the sender having to wait until the intended receiver is again within transmission range. Finally, in environments where there is some fluctuation in the delay (e.g., arising from fluctuating channel conditions or periods of heavy network usage), $\CollCost$ can be interpreted as an upper bound on this delay that holds under most operating conditions.

\subsection{Model and Notation}\label{sec:model}

We consider a system with a large, {\underline{unknown}} number of {\boldmath{$n$}} devices initially present, each with a single packet to send on a shared channel. Going forward, for ease of presentation, we refer only to the packets---rather than devices---committing actions such as sending and listening. \medskip

\noindent{\bf Communication.} Time is divided into disjoint \defn{slots}, each of which can accommodate a packet transmission. Communication occurs on a \defn{multiple access channel}, which is defined as follows.  In any slot, a packet can either attempt to transmit or listen to the channel. If no packet is transmitted in a given slot, it is \defn{empty}. If only one packet is transmitted, the packet \defn{succeeds}; we often refer to this as a \defn{success}. However, if multiple packets transmit simultaneously, they all fail---this is a \defn{collision}---and they may try again later. There is no scheduler or central authority {\it a priori}.\smallskip

\noindent{\bf No Collision Detection.} There is no mechanism to distinguish between an empty slot and a collision—that is, {\it no collision detection} (no-CD). In many practical systems, collision detection (CD) is unavailable or unreliable. For example, in WiFi networks, received signal strength may offer a weak form of CD, but it is error-prone. Consequently, many prior results address the challenging no-CD setting (see the survey \cite{banicescu2024survey} and related discussion in Section \ref{sec:related-work}). 

We emphasize that CD and the cost of collisions are orthogonal concerns: collisions incur a performance penalty regardless of whether they are detectable. For example, in WiFi, even if devices cannot sense an ongoing collision, the channel remains unavailable for its duration.
\medskip

\noindent{\bf Metrics.} We care about   \defn{latency}, which refers to the time between the first packet injection and the first success.
Additionally, each collision incurs a {
\underline{known}} delay of {\boldmath{$\CollCost$}} $\geq 4$.\footnote{The value $4$ is chosen to simplify the analysis. We also note that only $\CollCost = \omega(1)$ is interesting;  otherwise, an optimal algorithm (such as sawtooth backoff \cite{GreenbergL85,Gereb-GrausT92,banicescu2024survey}) for the standard model will be asymptotically optimal.}  To capture the worst-case performance,  $\CollCost$ is chosen by the adversary with knowledge of $n$,  subject to $\CollCost = O(\poly{n})$, where the degree of the polynomial is unknown. Note that this reveals no information about $n$, whose value is unknown to the packets {\it a priori}.

Given any \wakeup algorithm, the two pertinent costs are: (i) the \defn{total cost of collisions}, which is the number of collisions incurred over the execution of the algorithm multiplied by $\CollCost$, and (ii) the latency. We report on both of these metrics separately, with the aim of minimizing their maximum.
\medskip

\noindent{\bf Notation.} An event holds \defn{with high probability (\defn{\whp})} in $x$ if it occurs with probability at least $1-O(1/\poly{x})$, where we can make the constant degree of the polynomial  arbitrarily large. We use $\log^c y$ to denote $(\log y)^c$.


\subsection{Our Results}

The performance of our algorithm, \defn{\mainAlgorithm (\mainAlgorithmAcrynom)}, is parameterized by {\boldmath{$\tunableParam$}}, which is a small positive constant set by the user prior to execution. 

\begin{theorem}\label{thm:static_upper}
\mainAlgorithmAcrynom solves the wakeup problem with the following costs:
\begin{enumerate}[leftmargin=15pt]
     \item If $\CollCost = \Omega(\log^{\Theta(1/\tunableParam)}n)$, then the expected latency is $O(\CollCost^{\frac{1}{2}+2\tunableParam})$ and  the expected total cost of collisions is  $O(\CollCost^{\frac{1}{2}+\tunableParam})$.
     \item Else, the expected latency is $O(\log^{5+\frac{1}{2\tunableParam}} n)$  and the expected total cost of collisions is $O( \log^{5 + \frac{3}{2\tunableParam}}n)$.
\end{enumerate}
\end{theorem}

By comparison, existing approaches to wakeup incur a collision cost that is $\Omega(\CollCost)$ (see Section \ref{sec:tech-overview}). In contrast, Theorem \ref{thm:static_upper} implies that if $\CollCost$ is ``large'', \mainAlgorithmAcrynom offers performance that is  sublinear in $\CollCost$. Else,  latency and  expected total cost of collisions are  polylogarithmic-in-$n$, which echoes prior wakeup results (see Section \ref{sec:related-work}). 

Many upper bounds are achieved by ``fair'' algorithms. A \defn{fair algorithm} is defined by a sequence $p_1$, $p_2$, $\dots$, where $p_i$ denotes the sending probability used by every packet in the $i$-th slot. In Section \ref{sec:lower-bound-static},  we establish the following trade-off between latency and expected total cost of collisions. 

\begin{theorem}\label{thm:static_lower}
Let $\alpha$ and $\beta$ be constants such that $1/(2\tunableParam)\geq \alpha \geq 0$ and $\beta\geq 0$. Any fair algorithm for the wakeup problem that  has $O(\CollCost^{(1/2)+\alpha\tunableParam} \log^{\beta}n)$ latency has an expected total cost of collisions $\Omega(\CollCost^{(1/2)-\alpha\tunableParam}/\log^{\beta}n)$. 
\end{theorem}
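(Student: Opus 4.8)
The plan is to exploit the tension between latency and collision cost at the level of the per-slot success and collision probabilities of a fair algorithm. Since the algorithm is fair and there is no collision detection, the sending probabilities $p_1,p_2,\dots$ form a fixed (deterministic) sequence that cannot adapt before the first success. In the static setting all $n$ packets are active from slot~$1$, so in slot~$i$ the success probability is $s_i = n\,p_i(1-p_i)^{n-1}$ and the collision probability is $c_i = 1-(1-p_i)^{n}-s_i$, both determined by the sequence. The expected number of collisions before the first success is then the deterministic quantity $\sum_{i\ge 1} c_i\prod_{j<i}(1-s_j)$, and the expected collision cost is $\CollCost$ times this. Writing $T = O(\CollCost^{(1/2)+\alpha\tunableParam}\log^{\beta}n)$ for the assumed latency bound, the goal is to show this sum is $\Omega(1/T)$.

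Two ingredients drive the bound. First, solving wakeup \whp in $\eta$ within $T$ slots means the failure probability $\prod_{i=1}^{T}(1-s_i)$ is $O(1/\poly{\eta})$; combined with the Weierstrass inequality $\prod_{i=1}^{T}(1-s_i)\ge 1-\sum_{i=1}^{T}s_i$, this forces $\sum_{i=1}^{T}s_i\ge 1/2$ for $\eta$ large. Second, and crucially, I will establish the per-slot quadratic lower bound $c_i\ge \tfrac14 s_i^2$. For this, let $A_k$ be the probability that exactly $k$ packets send in slot~$i$, so $s_i=A_1$ and $c_i\ge A_2$. Computing the ratios $A_2/A_1$ and $A_1/A_0$ gives the identity $A_2 = \tfrac{n-1}{2n}\,A_1^2/A_0$, and since $A_0\le 1$ and $n\ge 2$ we obtain $c_i\ge A_2\ge \tfrac{n-1}{2n}s_i^2\ge \tfrac14 s_i^2$. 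I will also record the elementary fact that $s_i\le (1-1/n)^{n-1}\le 1/2$ for every $n\ge 2$, which is needed below.

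To combine these while respecting that only collisions \emph{before} the first success are charged, I will truncate to the prefix on which the process is still alive with constant probability. Let $\tau\le T$ be the first slot at which $\sum_{j=1}^{\tau}s_j\ge 1/2$ (such $\tau\le T$ exists by the first ingredient). For every $i\le\tau$ we have $\sum_{j<i}s_j<1/2$, so using $1-s\ge e^{-2s}$ on $[0,1/2]$ together with $s_j\le 1/2$, the reach probability satisfies $\prod_{j<i}(1-s_j)\ge e^{-2\sum_{j<i}s_j}\ge e^{-1}$. Hence the expected number of collisions is at least $\sum_{i=1}^{\tau} c_i\prod_{j<i}(1-s_j)\ge \tfrac{e^{-1}}{4}\sum_{i=1}^{\tau}s_i^2$. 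By Cauchy--Schwarz, $\sum_{i=1}^{\tau}s_i^2\ge (\sum_{i=1}^{\tau}s_i)^2/\tau\ge (1/2)^2/T$, giving $\Omega(1/T)$ collisions in expectation. Multiplying by $\CollCost$ and substituting $T=O(\CollCost^{(1/2)+\alpha\tunableParam}\log^{\beta}n)$ yields an expected collision cost of $\Omega(\CollCost^{(1/2)-\alpha\tunableParam}/\log^{\beta}n)$, matching the stated form with $k=\alpha$.

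I expect the main obstacle to be the conditioning embodied by the reach factor $\prod_{j<i}(1-s_j)$: a fast algorithm can drive this factor down quickly, so a naive sum $\sum_{i=1}^{T}c_i$ over all $T$ slots would overcount collisions that never actually occur. The resolution is the observation that a \emph{constant} amount of success probability must already be accumulated within the first $T$ slots, and that this budget can only be placed in slots where the reach factor is still $\Omega(1)$; restricting to the prefix $\tau$ isolates exactly these slots and lets Cauchy--Schwarz convert ``little success per slot'' into ``many slots, hence many collisions.'' The other technical linchpin is the inequality $c_i\ge\tfrac14 s_i^2$; proving it cleanly via the ratio $A_2/A_1$ (rather than by directly estimating $1-(1-p_i)^{n}-s_i$) is what keeps the argument free of any case analysis over the magnitude of $n\,p_i$.
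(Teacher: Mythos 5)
Your proposal is correct, and it reaches the theorem by a genuinely different route than the paper. The paper argues in terms of contention $\con=np(t)$: it first shows the total contention over the execution must be at least $2$ (Lemma~\ref{l:fair-algo-succeeds}), disposes of the case where some slot has contention above $2$ via Lemma~\ref{lem:high-contention} (which already forces $\Omega(\CollCost)$ cost), lower-bounds the per-slot collision probability by $\Omega(\conSquare)$ when contention is bounded (Lemma~\ref{lem:lower-bound-coll-prob}), and finishes with Jensen's inequality. You instead work with the per-slot success probability $s_i$, extract the mass bound $\sum_{i\le T}s_i\ge 1/2$ directly from the \whp latency guarantee via the Weierstrass product inequality, and replace the contention-versus-collision lemma with the exact identity $A_2=\frac{n-1}{2n}A_1^2/A_0$, which yields $c_i\ge s_i^2/4$ with no case analysis on the magnitude of $np_i$; Cauchy--Schwarz then plays the role of Jensen. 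Your version buys two things. First, the unconditional quadratic inequality eliminates the high-contention case split entirely. Second, and more substantively, your truncation at the stopping prefix $\tau$ together with the reach factor $\prod_{j<i}(1-s_j)\ge e^{-1}$ correctly charges only those collisions that occur \emph{before} the first success; the paper's proof sums $\Pr(\text{collision in slot }t)$ over all $L$ slots without conditioning on the algorithm still running, and since that unconditioned sum is an upper bound on the true expected number of collisions, a lower bound strictly needs the survival argument you supply. What the paper's version buys in exchange is brevity and the reuse of an existing lemma from Biswas et al.\ as a black box, plus a formulation (contention) that it can recycle in the dynamic setting.
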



\section{Related Work}\label{sec:related-work}

Wakeup, or leader election, addresses the fundamental problem of symmetry breaking. Early work by Willard~\cite{willard:loglog} establishes a $\lg\lg n + O(1)$ upper bound with a matching lower bound for fair algorithms under collision detection (CD).

In contrast,  {\it without CD, a common theme is that the latency increases to logarithmic or polylogarithmic in $n$}. For starters,  without collision detection and when  $n$ is known, Kushilevitz and Mansour \cite{kushilevitz1998omega,kushilevitz1993omega} prove a lower bound of $\Omega(\log n)$ expected latency.  Newport \cite{newport2014radio,newport:radio-journal}, in work that unifies many prior lower-bound results, shows $\Omega(\log n)$ expected latency and $\Omega(\log^2 n)$ latency \whp for general randomized algorithms. Complementing these lower bounds, Bar-Yehuda et al. \cite{bar1992time} achieve a matching upper bound of $O(\log n)$ expected latency and  \whp $O(\log^2 n)$ latency.

Farach-Colton et al. \cite{farach-colton:lower-bounds} obtain $\Omega(\log n \log(1/\varepsilon))$ latency for algorithms without CD, succeeding with probability at least $1-\varepsilon$. A complementary  by  Jurdziński and Stachowiak \cite{jurdzinski2002probabilistic}  demonstrates an algorithm that succeeds with probability at least $1-\varepsilon$ with $O(\log n \log(1/\varepsilon))$ latency, without CD, but assuming unique labels or knowledge of $n$.

With multiple channels and collision detection, Fineman et al. \cite{fineman:contention2} give an efficient contention-resolution algorithm. In the SINR model \cite{moscibroda2006complexity}, Fineman et al. \cite{fineman:contention}  achieve logarithmic latency and establish matching lower bounds. Chang et al. \cite{chang:exponential-jacm} examine energy efficiency under varying feedback models, while Gilbert et al. \cite{gilbert2021contention} study the effect of system-size prediction. 

{\it Research on the cost of collisions is still in its early stages.}  Anderton et al. \cite{anderton:windowed} show that collisions create substantial discrepancies between theoretical and observed performance in backoff algorithms, underscoring the need to account for collision costs. Building on this, Biswas et al. \cite{biswas2024softening} provide one of the first analyses, addressing the problem of contention resolution and deriving bounds on expected latency and total expected collision cost.

\begin{algorithm}[t!]
\caption{\bf \mainAlgorithm}\label{our-dynamic-algoritm-LE}

\SetKwFunction{FMain}{Elected-Leader}
\SetKwFunction{FGoodCon}{Good-Contention}
\SetKwProg{Fn}{Function}{:}{}

\BlankLine
\tcp*[h]{Initialization}\\
$w_{0} \leftarrow 2^{\CollCost^{\tunableParam}}$ \label{alg:first-line-dynamic}\\
$ j \leftarrow 1$\label{alg:third-line-dynamic}\\
\medskip
\medskip
\tcp*[h]{Iterations}\\
\While{true}{\medskip  \label{alg:iteration}
\tcp*[h]{Halving Phase}\\
$ \currWin \leftarrow w_{0}$ \\ 
\While{$\currWin \geq 2$}{{\label{alg:dynamic-first-while-loop}}
    \For{slot $i = 1$ \KwTo $d\sqrt{\CollCost}\,\ln(\currWin)$}{\label{alg:dynamic-first-while-for-loop}}{
        \vspace{-2pt}Send packet with probability $1/w_{cur}${\label{alg:dynamic-second-sending-probability}}\\
        \If{success}{Terminate}
        \Else{$\currWin\leftarrow \currWin/2$\label{alg:dynamic-havle-assign}\\}
    }}
\medskip\medskip
\tcp*[h]{Doubling Phase}\\
$ \currWin \leftarrow w_{0}$\\ 
\For{$s=1$ to $j$}{\label{alg:while_second_dynamic}
        $\currWin\leftarrow 2\currWin $\label{alg:dynamic-double-assign}\\
        \For{slot $i = 1$ \KwTo $d\sqrt{\CollCost}\,\ln(\currWin)$}{\label{alg:dynamic-second-while-for-loop}}{
        \vspace{-2pt}Send packet with probability $1/w_{cur}${\label{alg:dynamic-third-sending-probability}}\\
        \If{success}{Terminate}
     }
   } 
   $j \leftarrow j+1$
  } 
\end{algorithm}

\section{Algorithm Specification}\label{sec:our_algo_static}

The pseudocode for \mainAlgorithmAcrynom is given above, and each packet executes this algorithm upon being activated. Under \mainAlgorithmAcrynom, each packet begins with an initial window size of {\boldmath{$w_0$}} $=2^{\CollCost^{\tunableParam}}$. 

Let {\boldmath{$\currWin$}} denote the current window size. Initially, the  $\currWin$ is set to $w_0$, and this is updated throughout the execution, which proceeds by alternating a \defn{halving phase} and a \defn{doubling phase}, described in detail momentarily.  Each execution of a halving phase followed by the corresponding doubling phase is an \defn{iteration}. The parameter {\boldmath{$j$}} in our pseudocode indexes the iterations.
\medskip

\noindent{\underline{\it Halving phase}.} Every packet sends with probability  $\Theta(1/\currWin)$ in each of $d\sqrt{\CollCost}\,\ln(\currWin)$ slots, where {\boldmath{$d$}} $>0$ is a constant (Lines \ref{alg:dynamic-first-while-for-loop} and \ref{alg:dynamic-second-sending-probability}); we refer to these slots as a \defn{sample}. The constant $d$ is key to the bounds on error probability in our results (specifically, the \whp bound in Lemma \ref{lem:success-going-down}).

After finishing a sample, if no success has occurred, then  each packet sets $\currWin\leftarrow \currWin/2$. This is repeated until either a success (and termination) occurs, or further halving would reduce the  window size below $2$ (Line \ref{alg:dynamic-first-while-loop}). At which point, the doubling phase begins.\medskip

\noindent{\underline{\it Doubling phase}.} At the beginning of this phase,  $\currWin$ starts at $2w_0$. Each packet sends with probability  $\Theta(1/\currWin)$ in each slot of a sample (Line \ref{alg:dynamic-second-while-for-loop} and \ref{alg:dynamic-third-sending-probability}). If a success occurs, the execution terminates; otherwise, the window doubles and another sample is executed. This process continues until $j$ samples (each corresponding to a doubling of the window) are performed, where $j=1$ initially, and increments at the end of each iteration. \smallskip

\subsection{Context and Overview}\label{sec:tech-overview}

We start by considering how prior approaches fare in a model when there is a collision cost. Then, we discuss our design choices regarding \mainAlgorithmAcrynom, along with an overview of our analysis.

\subsubsection{\bf A Baseline via Backoff.} A natural approach for breaking symmetry is to ``backoff'', where each packet sends with probability $1/2^i$ in slot $i \geq 0$. At $i = \lg(n)$, the probability of a success in a slot becomes constant. An example of its use in solving wakeup is given by Jurdzi{\'n}ski and Stachowiak \cite{jurdzinski2002probabilistic}, who achieve $O(n/\log n)$ latency  with bounded error $\varepsilon$ in a similar model to ours (without a  notion of collision cost); however, each active packet starts with a high (i.e., constant) sending probability, which  inflicts $\Omega(\CollCost)$ cost.

So, we can observe that while backoff may be an important part of any solution, a key obstacle is how to set the initial sending probability suitably low. Since $n$ is unknown {\it a priori}, it seems that no matter what initial sending probability we choose---even if it depends on the known value of $\CollCost$---the adversary may choose a sufficiently large value of $n$ such that collisions are unavoidable. What can we do?

\subsubsection{\bf Design Choices and Overview.} \label{sec:design-choices}We now discuss how the design of \mainAlgorithmAcrynom allows us to do better than prior results in a model with collision cost.\medskip

\noindent{\textbf{Setting an Initial Window Size.}} The initial sending probability needs to be set carefully. Too high and we incur many  collisions as we discussed above. 

Consider a window of size $w^* = \Theta(n\sqrt{\CollCost})$; we refer to this as the size of a \defn{good window}.  Informally, once packets reach a good window, the chance of a success in any slot of the window is at least $\binom{n}{1}(1/w^*)\approx 1/\sqrt{\CollCost}$. Therefore, over $\Theta(\sqrt{\CollCost})$ slots, we expect a success. At the same time, the probability of a collision is approximately $\binom{n}{2} (1/w^*)^2 \approx 1/\CollCost$, and thus the expected collision cost per slot is $1$, and $O(\sqrt{\CollCost})$ over $\Theta(\sqrt{\CollCost})$ slots. This observation dictates the size of our samples. 

In designing \mainAlgorithmAcrynom, we hope that $w^* \leq w_0$, since in this case, the above reasoning yields good expected collision cost (and, as we will show, good latency). However, since the adversary sets $\CollCost$ with knowledge of $n$, our hopes may be dashed; the adversary can always make collisions overwhelmingly likely. However, in this case, our consolation prize is that collisions are cheap, and the degree of cheapness corresponds to $w_0$. We choose to "{\it aim high}" with the initial window size; that is, setting $w_0 = 2^{\CollCost^{\tunableParam}}$. This choice sets up our analysis of the following {\bf two cases}: 

\begin{enumerate}[leftmargin=14pt]
\item{\bf If {\boldmath{$n\sqrt{\CollCost}\leq w_0$}}}, then the good-window size is less than or equal to the initial-window size. Here, the probability of a collision is not too high. Thus, so long as we adjust the sending probability upwards  in a conservative manner, we can reduce expected collision cost; this is the purpose of the halving phase. Of course, there is a tension with latency, since this adjustment cannot be {\it too} conservative, otherwise the latency will be too large.\smallskip  

\item{\bf Else {\boldmath{$n\sqrt{\CollCost}> w_0$}}}, and the good-window size is larger than the initial-window size. Here, the collision probability may be ``too high''. However, a key realization is that $\CollCost$ is small;  we will argue that $n> 2^{\CollCost^{\tunableParam}}/\sqrt{\CollCost}$ implies that $\CollCost = O(\log^{1/\epsilon}n)$. Thus, we can tolerate these collisions as we increase the window size in the doubling phase. 
\end{enumerate}
\smallskip

\noindent{\bf Why Iterations?} Although \mainAlgorithmAcrynom is likely to terminate when a good window is reached, there is some small probability of failure. Suppose we are in Case 1, but this failure occurs in the halving phase. Then, this small probability of failure yields good expected total cost of  collisions as the window size decreases past the good window down to $2$ (see Lemma \ref{lem:down_coll_cost}). However, if the halving phase is never revisited, the packets may accrue intolerable latency by executing the doubling phase without success (as the sending probability drops off exponentially fast). Conversely, suppose we are in Case 2 and this failure occurs. Then, if we never reset $\currWin$ to (roughly) $w_0$, the latency costs may again be high. Therefore, iterations facilitate good expected latency (see Lemma \ref{lem:upward_latency}).
\medskip

\noindent{\bf Parameter {\boldmath{$\tunableParam$}}.} Throughout, we treat $\tunableParam$ as a small constant for the purposes of our analysis. Yet, it is worth pointing out how it interacts with the latency and the expected total cost of collisions. The former is improved by making $\epsilon$ small, since this shortens the halving phase. So, if aiming high pays off, then a smaller $\tunableParam$ can reduce latency.  However, we pay for this in terms of the latter metric; specifically, when $n\sqrt{\CollCost}>w_0$ and the algorithm looks to succeed in the doubling phase, we have $\CollCost = O(\log^{1/\tunableParam}n)$. Thus, the degree of the polynomial increases with the reciprocal of $\tunableParam$.


\section{Upper Bound}\label{sec:upper-bound-static-setting}

In this section, we analyze the performance of \mainAlgorithmAcrynom; due to space constraints, we omit several proofs. Define {\boldmath{$\eta$}} $= \max\{n, \CollCost\}$. A good window was loosely discussed in Section \ref{sec:design-choices}; we define it as any window with size in $[n\sqrt{\mathcal{C}}, 3 n\sqrt{\mathcal{C}}]$. The next lemma argues that success is likely in a good window.

\begin{lemma}\label{lem:success-going-down}
\Whp in $\eta$, a success occurs if \mainAlgorithmAcrynom executes in a good window.  
\end{lemma}

\subsection{{\bf Case 1:} {\boldmath{$n\sqrt{\CollCost} \leq w_0$.}}}\label{sec:downward_direction}
In this case,  the good window size is less than or equal to $w_0$, so we argue that \mainAlgorithmAcrynom is likely to obtain a success during the halving phase. We make use of the following known result that upper bounds the probability of a collision.

\begin{lemma}\label{lem:upper-prob-coll}(Biswas et al. \cite{biswas2024softening})
Consider any slot $t$ where there exist $m$ active packets, each sending with the same probability $p$, where  $p< 1/m$. The probability of a collision in slot $t$ is at most $\frac{2m^2p^2}{\left(1-mp\right)}$.
\end{lemma}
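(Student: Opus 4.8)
The plan is to bound the probability that at least two of the $m$ active packets transmit in slot $t$, using a union-bound-style second-moment argument. Let $X$ denote the number of packets that send in slot $t$; since each packet sends independently with probability $p$, we have that $X$ is a binomial random variable with parameters $m$ and $p$. A collision occurs exactly when $X \geq 2$, so the quantity we must control is $\Pr(X \geq 2)$.

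First I would expand $\Pr(X \geq 2)$ in a convenient way. The cleanest route is to write $\Pr(X \geq 2) = \sum_{k=2}^{m} \binom{m}{k} p^k (1-p)^{m-k}$ and bound this sum from above. A natural approach is to factor out the leading $\binom{m}{2}p^2$ term and argue that the tail is dominated by a geometric-type series. Specifically, I would compare consecutive terms: the ratio of the $(k+1)$-th term to the $k$-th term is $\frac{m-k}{k+1}\cdot\frac{p}{1-p}$, which, under the hypothesis $p < 1/m$, is bounded by something strictly less than $1$, allowing the whole sum to be controlled by its first term times a geometric correction factor. The target bound $\frac{2m^2 p^2}{1-mp}$ suggests that the intended factoring is $\binom{m}{2}p^2 \leq \frac{m^2 p^2}{2}$ for the leading contribution, with the denominator $(1-mp)$ arising as the sum of a geometric series with ratio roughly $mp$.

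An alternative and perhaps cleaner derivation is to use the complementary event directly: $\Pr(X \geq 2) = 1 - \Pr(X=0) - \Pr(X=1) = 1 - (1-p)^m - mp(1-p)^{m-1}$, and then bound this expression analytically. One could apply Fact~\ref{fact:taylor} to handle the $(1-p)^m$ terms, though turning this into precisely the stated closed form requires some care with the error terms; the geometric-series comparison is likely the more direct path to matching the exact constant and the $(1-mp)$ denominator. I would use whichever of these two routes lands most cleanly on the stated bound.

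The main obstacle I anticipate is getting the constants and the denominator exactly right rather than merely obtaining a bound of the correct asymptotic form $O(m^2 p^2)$. In particular, the appearance of the factor $\frac{1}{1-mp}$ (as opposed to, say, $\frac{1}{1-p}$ or a plain constant) must emerge from correctly bounding the ratio of successive binomial terms by $mp$ and summing the resulting geometric series; a loose comparison would give the right shape but the wrong denominator. Since this lemma is quoted from Biswas et al.~\cite{biswas2024softening}, I would ensure the geometric-series bookkeeping is tight enough to recover the precise factor $2$ in the numerator and the $(1-mp)$ term in the denominator.
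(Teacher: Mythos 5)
The paper does not prove this lemma---it is imported verbatim from Biswas et al.~\cite{biswas2024softening}---so there is no in-paper argument to compare against. Your plan, however, is sound and would yield the stated bound (in fact a slightly stronger one). Writing $\Pr(X\geq 2)=\sum_{k=2}^{m}\binom{m}{k}p^k(1-p)^{m-k}$, you can simply drop the $(1-p)^{m-k}$ factors, note that $\binom{m}{k+1}p^{k+1}\big/\binom{m}{k}p^{k}=\tfrac{(m-k)p}{k+1}\leq mp<1$, and sum the resulting geometric series to get at most $\binom{m}{2}p^2\cdot\tfrac{1}{1-mp}\leq \tfrac{m^2p^2}{2(1-mp)}\leq \tfrac{2m^2p^2}{1-mp}$; your worry about recovering the exact $(1-mp)$ denominator is resolved exactly this way, since the term ratio is bounded by $mp$ rather than merely by some constant less than $1$. (If you keep the $(1-p)^{m-k}$ factors, the ratio $\tfrac{m-k}{k+1}\cdot\tfrac{p}{1-p}$ is still at most $mp$ whenever $mp<1$, so that route works too.) The only shortcoming of the proposal is that it hedges between two routes without carrying either to completion; committing to the geometric-series comparison and executing the three lines above closes the argument.
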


\begin{lemma}\label{lem:down_coll_cost}
Assume that $n\sqrt{\CollCost} \leq w_0$. The expected  total cost of collisions  for \mainAlgorithmAcrynom is $O(\CollCost^{(1/2)+\tunableParam})$.
\end{lemma}
\begin{proof}
For any fixed iteration $j\geq 1$ (i.e., an execution of the while loop of Line \ref{alg:iteration} in our pseudocode), we can analyze the expected total collision cost as follows. Let {\boldmath{$q$}} denote the probability that no success occurs 
in the halving phase of iteration $j$; note that $q$ is the same for each iteration. We accrue collisions that occur in the entire halving phase and the following doubling phase of iteration $j$. Let {\boldmath{$B_j$}} denote the cost of those collisions that occur in any slot from the start of iteration $1$ to until the end of iteration $j$. 

Additionally, let {\boldmath{$G$}} denote the contribution to the expected cost for the first halving phase of the first iteration. 
Over all iterations, the expected total collision cost is thus at most:
\begin{align}
    &  G + \sum_{j=1}^{\infty} q^{j} \cdot B_j \label{eq:expectation}
\end{align}

{\it {\underline{Upper Bounding $B_j$}}.} Clearly, $B_j$ is upper-bounded by the cost of having a collision in every single slot of iterations $1$ through $j$. In any iteration $k$, there is a sample of $d\sqrt{\CollCost}\ln(\currWin)$ slots for each of the at most $\lg(w_0)$ windows; thus, over iterations $1$ through $j$, the halving phases contribute  $O(j\sqrt{\CollCost}\ln(\currWin) \lg(w_0))$ slots. Additionally, in iteration $k$, the corresponding doubling phase executes $k$ samples, each consisting of $d\sqrt{\CollCost}\ln(\currWin)$ slots; thus, the doubling phases contribute $O(j^2\sqrt{\CollCost}\ln(\currWin))$ slots.  By the above:
\begin{align*}
 B_j &=  O\left(j\,\sqrt{\CollCost}\ln(\currWin) \lg(w_0) + j^2\sqrt{\CollCost}\ln(\currWin)\right)\cdot \CollCost\\
&= O\left(j^2\,\sqrt{\CollCost}\ln(\currWin)\lg(w_0) \right)\cdot \CollCost\\
   &= O\left(j^3C^{(3/2)+2\epsilon}\right)
\end{align*}
\noindent since $\log(w_0) = \log(2^{\CollCost^\epsilon}) = O(\CollCost^{\epsilon})$, and $\ln(\currWin) \leq \ln(j w_0) = O(j \CollCost^{\epsilon})$.

By Lemma \ref{lem:success-going-down}, $q= 1/\eta^{c}$, where recall that $\eta=\max\{n, \CollCost\}$ and we can make the constant $c>0$ as large as we wish.  Thus,  
\begin{align*}
    \sum_{j=1}^{\infty} q^{j} \cdot B_j & \leq \sum_{j=1}^{\infty} (1/\eta^{c})^j \cdot O(j^3 C^{(3/2)+2\epsilon})\\
    & = O(1)
\end{align*}
\noindent where the last line follows for sufficiently large $c$.\medskip

{\it {\underline{Upper Bounding $G$}}.} To upper bound $G$, for each slot $i$ in the first halving phase of the first iteration, let $X_i$ be an independent random variable where $X_i = 1$ if the $i$-th slot contains a collision;  otherwise, $X_i = 0$. Recall that the windows of this first halving phase are repeatedly halved. In a window of size $w=w_0/2^k$ where $k\geq 0$ is a positive integer, by Lemma \ref{lem:upper-prob-coll}, with $m=n$ and $p=1/w$ (since $w \geq n\sqrt{\CollCost}$ and $\CollCost \geq 4$, we satisfy the requirement that $p < 1/m$): 
\begin{align*}
Pr(X_i=1) & \leq \frac{2(n/w)^2}{\left(1-(n/w)\right)}
 = \frac{2(n/(w_0/2^k))^2}{\left(1-(n/(w_0/2^k))\right)}.
\end{align*}
\noindent The expected cost due to collisions that occur in this particular  window is thus at most:
\begin{align*}
  Pr(X_i=1) \cdot d\sqrt{\CollCost}\ln(\currWin) \cdot \CollCost 
  & \leq \frac{2(n/(w_0/2^k))^2}{\left(1-(n/(w_0/2^k))\right)} d\CollCost^{(3/2)+\epsilon}.  
\end{align*}

Note that the number of windows executed starting from an initial size $w_0=2^{\CollCost^{\tunableParam}}$ down to size $n\sqrt{\CollCost}$, is at most $C^{\tunableParam} - \lg{n} -  (\lg \CollCost)/2$; for ease of presentation, let $x=C^{\tunableParam} - \lg{n} -  (\lg \CollCost)/2$. We have:
\begin{align}
  G &\leq  \sum_{k=0}^{x} \hspace{-0pt}\frac{2\left(\frac{n}{w_0/2^k}\right)^2}{\left(1-\left(\frac{n}{w_0/2^k}\right)\right)} \left(d\CollCost^{(3/2)+\epsilon}\right) \nonumber\\
  &=  \left(d\CollCost^{(3/2)+\epsilon}\right) \sum_{k=0}^{x} \hspace{-0pt}\frac{2n^2 \left(\frac{2^k}{2^{\CollCost^\tunableParam}}\right)^2}{\left(1-\frac{n 2^k}{2^{\CollCost^\tunableParam}}\right)}\nonumber\\
    &\leq    \left(d\CollCost^{(3/2)+\epsilon}\right) \frac{2n^2}{({2^{\CollCost^\tunableParam}})^2}\sum_{k=0}^{x} \hspace{-0pt}\frac{2^{2k}}{\left(1-\frac{1}{\sqrt{{\CollCost}}}\right)}\label{eq:case1}\\
    &\leq  \left(d\CollCost^{(3/2)+\epsilon}\right) \frac{4n^2}{({2^{\CollCost^\tunableParam}})^2}\sum_{k=0}^{x} \hspace{-0pt}2^{2k}\label{eq:Catleast4}\\ 
    & =  \left(d\CollCost^{(3/2)+\epsilon}\right) \frac{4n^2}{({2^{\CollCost^\tunableParam}})^2} \frac{4^{\CollCost^\tunableParam+1}}{(4-1)4^{\lg{n}} 4^{\frac{\lg{\CollCost}}{2}}}\label{eq:geometric}\\ 
   & = O\left(\CollCost^{(1/2)+\epsilon}\right).\nonumber
\end{align}
\noindent In the above, line \ref{eq:case1}  holds since  $n\sqrt{\CollCost} \leq 2^{\CollCost^{\tunableParam}}$, since we are in Case 1. Line \ref{eq:Catleast4} holds for any $\CollCost \geq 4$, and line \ref{eq:geometric} holds by the sum of the geometric series. Thus, Equation  \ref{eq:expectation} is bounded as claimed.
\end{proof}

The next lemma bounds the expected latency of our algorithm, and its argument is similar to that of Lemma \ref{lem:down_coll_cost}, although an extra $O(\CollCost^{\tunableParam})$-factor appears. Informally, this occurs since, in counting slots for the first halving phase, we have a sample size of $d\sqrt{\CollCost}\ln(\currWin)$ $\leq d\sqrt{\CollCost}\ln(w_0)$ $= d\sqrt{\CollCost} C^{\tunableParam} = O(\CollCost^{(1/2)+\tunableParam})$, and we have $\log_2 w_0=O(C^{\tunableParam})$ samples.

\begin{lemma}\label{lem:down_latency}
Assume that $n\sqrt{\CollCost} \leq w_0$. The expected latency of \mainAlgorithmAcrynom is $O(\CollCost^{(1/2)+2\tunableParam})$.
\end{lemma}

\subsection{ {\bf Case 2:} {\boldmath{$n\sqrt{\CollCost} > w_0$.}}}\label{sec:upward_direction} In this case, the good window size exceeds $w_0$, and so we argue that  \mainAlgorithmAcrynom likely succeeds during the doubling phase. A key insight is that, for Case 2, $\CollCost$ is small in terms of $n$. 

\begin{lemma}\label{l:coll-cheap-static}
Assume $n\sqrt{\CollCost} > w_0$. Then, $\CollCost = O(\log^{1/\tunableParam} n)$.
\end{lemma}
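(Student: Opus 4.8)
The plan is to linearize the hypothesis by taking logarithms and then solve for $\CollCost$. Starting from the assumption $n\sqrt{\CollCost} > w_0 = 2^{\CollCost^{\tunableParam}}$, I would take base-$2$ logarithms of both sides to obtain $\lg n + \tfrac{1}{2}\lg \CollCost > \CollCost^{\tunableParam}$. The target conclusion $\CollCost = O(\log^{1/\tunableParam} n)$ is equivalent to showing $\CollCost^{\tunableParam} = O(\log n)$, so the whole argument reduces to bounding the right-hand expression $\lg n + \tfrac12 \lg\CollCost$ by $O(\log n)$.

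The key step is disposing of the self-referential term $\tfrac{1}{2}\lg\CollCost$, since $\CollCost$ appears on both sides of the inequality. Here I would invoke the model assumption that $\CollCost = O(\poly{n})$, i.e. $\CollCost \leq n^{k}$ for some constant $k$ whose value is unknown but which is guaranteed to exist. This yields $\lg \CollCost \leq k\lg n$, and hence $\tfrac{1}{2}\lg\CollCost = O(\log n)$. Substituting back into the logarithmic inequality gives $\CollCost^{\tunableParam} < \lg n + \tfrac{k}{2}\lg n = \bigl(1 + \tfrac{k}{2}\bigr)\lg n = O(\log n)$.

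Finally, since $\tunableParam$ is a fixed positive constant, I would raise both sides to the power $1/\tunableParam$ (which preserves the inequality for the nonnegative quantities involved) to conclude $\CollCost = O\!\left((\log n)^{1/\tunableParam}\right) = O(\log^{1/\tunableParam} n)$, as desired. I do not expect a genuine obstacle here: the statement is a short algebraic consequence of the hypothesis together with the polynomial bound on $\CollCost$. The only point requiring care is that the circular $\lg \CollCost$ term must be controlled \emph{before} isolating $\CollCost$, which is precisely the role of the $\CollCost = O(\poly{n})$ assumption (an assumption that, as the model notes, reveals no information about $n$ itself).
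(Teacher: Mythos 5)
Your proof is correct, and it diverges from the paper's at exactly the one nontrivial step: how to control the self-referential term $\tfrac12\lg\CollCost$ in the inequality $\lg n + \tfrac12\lg\CollCost > \CollCost^{\tunableParam}$. You dispose of it externally, by invoking the model assumption $\CollCost = O(\poly{n})$ so that $\lg\CollCost = O(\log n)$, which gives $\CollCost^{\tunableParam} \le (1+\tfrac{k}{2})\lg n$ with $k$ the (unknown) polynomial degree. The paper instead absorbs the term internally: it observes that $(\lg\CollCost)/2 < \CollCost^{\tunableParam}/2$ once $\tunableParam > \lg\lg(\CollCost)/\lg\CollCost$ (i.e., for $\CollCost$ large enough relative to the constant $\tunableParam$; for bounded $\CollCost$ the lemma is vacuous), which yields $\lg n > \CollCost^{\tunableParam}/2$ and hence the explicit bound $\CollCost < (2\lg n)^{1/\tunableParam}$ with no reference to the polynomial bound on $\CollCost$. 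The trade-off is minor but real: the paper's argument is self-contained and produces a constant $2^{1/\tunableParam}$ that is independent of the adversary's choice of polynomial degree, whereas your constant $(1+k/2)^{1/\tunableParam}$ depends on the unknown $k$ --- still a constant, so the asymptotic statement $\CollCost = O(\log^{1/\tunableParam} n)$ and everything downstream of it is unaffected, but the hidden constant is not one the algorithm or analysis could ever name explicitly. Either route is a valid proof of the lemma as stated.
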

\begin{proof}
Since  $n > 2^{\CollCost^\tunableParam}/\sqrt{\CollCost}$, we have:
\begin{align}
    \lg n & > \CollCost^{\tunableParam} - (\lg \CollCost)/2\notag\\
    & >  \CollCost^{\tunableParam} -  (\CollCost^{\tunableParam}/2).\label{eqn:coll-cost-delta}
\end{align}
\noindent where the second line holds so long as $\tunableParam > \frac{\lg\lg(\CollCost)}{\lg \CollCost}$, which is true by assumption since $\tunableParam$ is a constant, and this is well-defined since $\CollCost\geq 4$.  Returning to Equation \ref{eqn:coll-cost-delta}, we have that $(2\lg n)^{1/\tunableParam} > \CollCost$, as desired, since $\tunableParam$ is a constant.
\end{proof}

The next lemma establishes bounds on expected latency and total collision cost, and the intuition is as follows. In Case 2, collisions are ``cheap'' by Lemma \ref{l:coll-cheap-static}, in the sense that $\CollCost = O(\log^{1/\tunableParam} n)$. Thus, the cost of collisions incurred until the good window is reached is small; it takes only $O(\log n)$ doublings in the doubling phase. 
After this  point the probability of success is high, and so subsequent iterations contribute little to the expected total cost of collisions. 

\begin{lemma}\label{lem:upward_latency}
Assume $n\sqrt{\CollCost} > w_0$.  Then, the expected latency is $O\left(\log^{1/(2\tunableParam)+5}n)\right)$.
\end{lemma}
\begin{proof}
The number of slots incurred for each individual halving phase is the sample size times the number of halvings, which is:  
\begin{align*}
O\left( \sqrt{\CollCost}\ln(\currWin) \right) \lg(w_0) &= O\left(\sqrt{\CollCost} \log(w_0)\log(w_0)\right)\\
&= O\left(\sqrt{\CollCost}\CollCost^{2\tunableParam}  \right)\\
&= O\left(  \log^{1/(2\epsilon) + 2}n\right)
\end{align*}
\noindent where the second line follows from $w_0=2^{\CollCost^{\tunableParam}}$, and the last line follows by applying Lemma \ref{l:coll-cheap-static}. Thus, over iterations $1$ to $j$, the number of slots contributed by the halving phases is $O(j^2\log^{ 1/(2\epsilon)+2}n)$

The number of slots in the doubling phase of the $j$-th iteration is at most:
\begin{align*}
j d\sqrt{\CollCost}\ln(\currWin) &= O\left(j \sqrt{\CollCost}\log(2^j w_0)\right) \\
&=O\left(j \sqrt{\CollCost}\, j \CollCost^{\tunableParam} \right) \\
& = O\left(j^2\, (\log^{1/(2\tunableParam)+1}n)\right) 
\end{align*}
slots, where the last line follow by Lemma \ref{l:coll-cheap-static}. By the above, summing up the number of slots executed in the doubling phases over iterations 1 to $j$ is $O(j^3\, (\log^{1/(2\tunableParam)+2}n))$.

Pessimistically, let us assume no success ever occurs during the halving phase. Focusing on the doubling phase, a good window is reached within at most $\log_2(n\sqrt{\CollCost}) + O(1) = \log_2 n + O(\log\log n)$ doublings of the window size. Since $j$ counts the number of doublings we perform in the doubling phase (recall Section \ref{sec:our_algo_static}), we reach a good window by iteration $j=c'\log n$, where $c'>0$ is a sufficiently large constant. By the above, the total number of slots up until a good window is reached is:
\begin{align*}
    O\left(j^3\, (\log^{1/(2\tunableParam)+2}n)\right) & = O\left( (\log^3 n) (\log^{1/(2\tunableParam)+2}n)\right)\\
    &= O\left(\log^{1/(2\tunableParam)+5}n\right)
\end{align*}

After this point the probability of success is high, by Lemma \ref{lem:success-going-down}, and so subsequent iterations contribute little to the expected total cost of collisions. The expected latency is:
 \begin{align*}
           &  O\left(\log^{1/(2\tunableParam)+5}n\right) + \hspace{-8pt} \sum_{j=c'\lg\lg n}^{\infty} \hspace{-8pt}O\left(\frac{1}{\eta}\right)^{j-c'\lg\lg(n) + 1} \hspace{-5pt}  O\left(j^3\, (\log^{1/(2\tunableParam)+2}n)\right) \\
           & = O\left(\log^{1/(2\tunableParam)+5}n\right)
    \end{align*}
    which completes the argument.
\end{proof}

The following lemma is proved similarly to Lemma \ref{lem:upward_latency}. The main difference is that, even if every slot contains a collision, the cost increases only by a $\CollCost = O(\log^{1/\epsilon}n)$ factor.

\begin{lemma}\label{lem:upward_collisions}
Assume $n\sqrt{\CollCost} > w_0$.  Then, the expected total cost of collisions is $O\left(\log^{3/(2\tunableParam)+5}n\right)$.  
\end{lemma}


\subsection{\bf Performance of \mainAlgorithmAcrynom}

We can now prove our upper bound.  
\medskip

\noindent\textsc{Theorem} \ref{thm:static_upper}. 
{\it \mainAlgorithmAcrynom solves the wakeup problem with the following costs:}\smallskip
\begin{enumerate}[leftmargin=15pt]
     \item {\it If\hspace{2pt} $\CollCost = \Omega(\log^{\Theta(1/\tunableParam)}n)$, then the expected latency is $O(\CollCost^{\frac{1}{2}+2\tunableParam})$ and  the expected total cost of collisions is  $O(\CollCost^{\frac{1}{2}+\tunableParam})$. }
     \item {\it Else, the expected latency is $O(\log^{5+\frac{1}{2\tunableParam}} n)$  and the expected total cost of collisions is $O( \log^{5 + \frac{3}{2\tunableParam}}n)$.
     }
 \end{enumerate} 
\begin{proof}
The relationship between $n$ and $\CollCost$ is established by the adversary (recall Section \ref{sec:model}), and there are two cases to consider. In Case 1, $n\sqrt{\CollCost} \leq w_0$. By Lemmas \ref{lem:down_coll_cost} and \ref{lem:down_latency}, the expected latency and the expected total collision cost for \mainAlgorithmAcrynom are respectively  $O(\CollCost^{(1/2)+2\tunableParam})$ and $O(\CollCost^{(1/2)+\tunableParam})$, yielding (1).

In the second case, $n\sqrt{\CollCost} > w_0$. By Lemmas, \ref{lem:upward_latency} and \ref{lem:upward_collisions},  the expected latency and expected total cost of collisions are $O(\log^{5+\frac{1}{2\tunableParam}} n)$ and $O( \log^{5 + \frac{3}{2\tunableParam}}n)$, respectively,  yielding (2).
\end{proof}

\section{Lower Bound}\label{sec:lower-bound-static}

Here, we show a trade-off between latency and the expected total cost of collisions for fair algorithms. For any fixed packet $u$, let {\boldmath{$p(t)$}} be the probability of sending in slot $t$. For any fixed algorithm $\mathcal{A}$, the sum of the sending probabilities of all packets over all time slots during the execution of $\mathcal{A}$ is the \defn{total contention}. 

\begin{lemma}\label{l:fair-algo-succeeds}
    Any fair algorithm for the wakeup problem that succeeds \whp in $\eta$ must have total contention at least $2$.
\end{lemma}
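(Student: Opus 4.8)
The plan is to prove the contrapositive: I will show that if a fair algorithm $\mathcal{A}$ has total contention $\TC(\mathcal{A}) < 2$, then its probability of never succeeding is bounded below by a fixed positive constant, so it cannot succeed \whp in $\eta$. First I would unpack the definitions for a fair algorithm on $n$ packets. In slot $t$ every packet sends with the common probability $p_t$, so the contention contributed by slot $t$ is $c_t = n\,p_t$ and $\TC(\mathcal{A}) = \sum_t c_t$.

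Next I would bound the per-slot success probability. A success in slot $t$ occurs exactly when a single packet among the $n$ sends, so $s_t = n\,p_t(1-p_t)^{n-1}$. Two elementary observations drive the argument: since $(1-p_t)^{n-1}\le 1$, we have $s_t \le n\,p_t = c_t$; and maximizing $n\,p(1-p)^{n-1}$ over $p$ (the maximum is attained at $p = 1/n$) gives $s_t \le (1-1/n)^{n-1} \le 1/2$ for every $n \ge 2$. The second, uniform bound is exactly what permits me to invoke Fact \ref{fact:taylor}(b).

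Then I would lower-bound the failure probability. Because the sending decisions are independent across slots, the probability that $\mathcal{A}$ never succeeds is $\prod_t (1 - s_t)$. Applying $1 - x \ge e^{-2x}$ for $0 \le x \le 1/2$ (Fact \ref{fact:taylor}(b)), which is legitimate since each $s_t \le 1/2$, I obtain $\prod_t (1 - s_t) \ge \exp\!\left(-2\sum_t s_t\right) \ge \exp\!\left(-2\,\TC(\mathcal{A})\right)$, where the final step uses $s_t \le c_t$ and $\sum_t c_t = \TC(\mathcal{A})$. Hence, if $\TC(\mathcal{A}) < 2$, the failure probability strictly exceeds $e^{-4}$, a constant independent of $\eta$.

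Finally I would close the argument. Succeeding \whp in $\eta$ means the failure probability is $O(1/\poly{\eta})$, which tends to $0$ as $\eta$ grows; this contradicts a failure probability of at least $e^{-4}$ for sufficiently large $\eta$ (and $\eta \ge n$ is large by assumption). Therefore any fair algorithm succeeding \whp in $\eta$ must have $\TC(\mathcal{A}) \ge 2$. The main obstacle, and really the only subtlety, is arranging the inequalities so that Fact \ref{fact:taylor}(b) applies cleanly, namely verifying the uniform bound $s_t \le 1/2$ together with $s_t \le c_t$; once these and the across-slot independence are in place, the result follows from a single telescoping of the per-slot bounds.
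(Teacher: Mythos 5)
Your proof is correct, but it takes a genuinely different route from the paper's. You decompose the failure event \emph{by slot}: the probability of never succeeding is exactly $\prod_t (1-s_t)$ with $s_t = n p_t (1-p_t)^{n-1}$, and the two bounds $s_t \le c_t$ and $s_t \le (1-1/n)^{n-1} \le 1/2$ let you telescope via Fact~\ref{fact:taylor}(b) to a failure probability exceeding $e^{-2\TC(\mathcal{A})} > e^{-4}$. The paper instead decomposes \emph{by packet}: it defines the lifetime contribution $\LC(u) = \sum_t p(t)$, lower-bounds the probability that a fixed packet $u$ never succeeds by $\prod_t (1-p(t)) \ge e^{-2\LC(u)} \ge e^{-4/n}$, and then raises this to the $n$-th power to bound the probability that all packets fail, arriving at the same constant $e^{-4}$. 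Your per-slot decomposition has a technical advantage: the events ``no success in slot $t$'' genuinely are independent across slots (each slot uses fresh coins), so your product step is exact, whereas the paper's final step multiplies the per-packet failure probabilities as if they were independent events, which they are not---they are determined by the same coin flips---so strictly speaking that step needs an additional positive-correlation argument. The price you pay is the extra observation that $s_t \le 1/2$ uniformly, which is needed to invoke Fact~\ref{fact:taylor}(b) and which you verify correctly for $n \ge 2$ (consistent with the paper's standing assumption that $n$ is large). Both arguments yield the same constant and the same contradiction with the \whp guarantee.
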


\noindent In the context of fair algorithms, the contention at any fixed slot $t$  is \defn{\con}$=\sum_{\mbox{\footnotesize{all packets}}} p(t)$.

\begin{lemma}[Biswas et al. \cite{biswas2024softeningimpactcollisionscontention}, Lemma 24]\label{lem:high-contention}
Fix any slot $t$ in any fair algorithm for the wakeup problem. If $\con> 2$, then the probability of a collision is at least $1/10$.
\end{lemma}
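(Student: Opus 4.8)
The plan is to reduce the statement to a single clean one-variable inequality. Since the algorithm is fair, in slot $t$ each of the $n$ active packets sends independently with the same probability $p := p(t)$, so the number of senders $X$ follows a $\mathrm{Binomial}(n,p)$ distribution and the hypothesis reads $\con = np > 2$. A collision is precisely the event $X \ge 2$, so it suffices to show $Pr(X \le 1) \le 9/10$, i.e.\ to upper bound the ``no collision or single success'' mass.

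First I would write this mass in closed form:
\[
Pr(X\le 1) = (1-p)^n + np(1-p)^{n-1} = (1-p)^{n-1}\bigl(1+(n-1)p\bigr).
\]
The key idea is to collapse the two parameters $n$ and $p$ into the single quantity $s := (n-1)p$. Applying Fact~\ref{fact:taylor}(a) gives $(1-p)^{n-1} \le e^{-(n-1)p} = e^{-s}$, and hence $Pr(X\le 1) \le (1+s)e^{-s}$. I would then note that $g(s) := (1+s)e^{-s}$ is strictly decreasing for $s > 0$ (since $g'(s) = -s\,e^{-s} < 0$), so any lower bound on $s$ converts into an upper bound on $Pr(X\le 1)$.

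The binding constraint on $s$ comes not directly from $\con > 2$ but from combining it with the fact that $p$ is a probability: $s = np - p > 2 - p \ge 1$, using $p \le 1$. Therefore $Pr(X \le 1) \le g(s) < g(1) = 2/e < 9/10$, which yields $Pr(X \ge 2) > 1 - 2/e > 1/10$, as required.

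The one place that needs care — and what I would flag as the main obstacle — is ensuring the bound holds \emph{uniformly} over all admissible $(n,p)$ rather than merely in a convenient regime (such as the Poisson limit $n\to\infty$). The substitution $s = (n-1)p$ together with the observation that the operative lower bound is $s > 1$ (extracted from $p \le 1$) is exactly what makes the argument uniform and sidesteps any case split on whether $p$ is small or large. Edge cases such as $p = 1$ (where $(1-p)^{n-1} = 0$) and the implicit fact that $\con > 2$ forces $n \ge 3$ then fall out automatically and require no separate treatment.
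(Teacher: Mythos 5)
Your proof is correct. Note that the paper itself does not prove this statement---it is imported verbatim as Lemma~24 of Biswas et al.~\cite{biswas2024softeningimpactcollisionscontention}---so there is no in-paper argument to compare against; your derivation supplies a valid self-contained proof. The chain $Pr(X\le 1)=(1-p)^{n-1}(1+(n-1)p)\le (1+s)e^{-s}$ with $s=(n-1)p>1$ (forced by $np>2$ and $p\le 1$), followed by monotonicity of $(1+s)e^{-s}$, gives the uniform bound $Pr(\text{collision})>1-2/e>1/10$, which is in fact stronger than the claimed constant.
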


By Lemma \ref{lem:high-contention},  any algorithm that has a slot with contention higher than $2$ has an expected collision cost that is $\Omega(\CollCost)$. Therefore, we need only to consider algorithms where every slot in the execution has contention at most $2$. Under these circumstances, we provide a lemma that lower bounds the probability of a collision as a function of contention. We use $\conSquare$ to denote $(\con)^2$.

\begin{lemma}\label{lem:lower-bound-coll-prob}
Fix any slot $t$ where there are at least $2$ packets and $\con\leq 2$ under the execution of any fair algorithm for the wakeup problem. The probability of a collision in $t$ is   $\Omega(\conSquare)$. 
\end{lemma}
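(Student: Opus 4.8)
The plan is to lower-bound the collision probability by the probability of a single, easy-to-analyze event --- that exactly two packets transmit --- and then show that this quantity already carries the desired $\Omega(\conSquare)$ dependence once the contention is capped at $2$. Let $m \geq 2$ denote the number of active packets in slot $t$; since the algorithm is fair, all of them send with a common probability $p$, so that $\con = mp \leq 2$, and hence $p = \con/m \leq 2/m$.

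First I would note that a collision is exactly the event that at least two packets transmit, so its probability is at least the probability that precisely two do:
\[
\Pr[\text{collision in } t] \;\geq\; \binom{m}{2}\,p^2\,(1-p)^{m-2}.
\]
Next I would split the right-hand side into a combinatorial factor and a survival factor. For the combinatorial factor, using $m-1 \geq m/2$ (valid for $m \geq 2$) gives $\binom{m}{2}p^2 = \tfrac{m(m-1)}{2}p^2 \geq \tfrac{m^2 p^2}{4} = \tfrac{\conSquare}{4}$, which is precisely where the quadratic dependence on contention originates.

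The remaining and main step is to show that the survival factor $(1-p)^{m-2}$ is bounded below by a positive constant; this is exactly where the hypothesis $\con \leq 2$ is indispensable, since without a bound on $\con$ this factor could be exponentially small. For $m \geq 4$ we have $p \leq 2/m \leq 1/2$, so Fact \ref{fact:taylor}(b) applies and yields $1-p \geq e^{-2p}$, whence $(1-p)^{m-2} \geq e^{-2p(m-2)} \geq e^{-2pm} = e^{-2\con} \geq e^{-4}$ using $\con \leq 2$. The only delicate point, and the place where a sloppy argument would break, is that Fact \ref{fact:taylor}(b) requires $p \leq 1/2$, which fails for the edge cases $m \in \{2,3\}$; there, however, the exponent $m-2 \in \{0,1\}$ makes the factor trivial to handle directly, since it equals either $1$ or $1-p \geq 1 - 2/3 = 1/3 \geq e^{-4}$. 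Combining the two factors across all cases gives $\Pr[\text{collision in } t] \geq (e^{-4}/4)\,\conSquare = \Omega(\conSquare)$, as claimed. Thus the only care required is the small-$m$ bookkeeping that keeps the invocation of Fact \ref{fact:taylor}(b) within its range of validity.
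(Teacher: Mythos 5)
Your proof is correct and follows essentially the same route as the paper's: lower-bound the collision probability by the event that exactly two packets send, extract the $\conSquare$ factor from $\binom{m}{2}p^2$, and bound the survival factor below by $e^{-2\con}\geq e^{-4}$ using $\con\leq 2$. If anything, your version is slightly more careful than the paper's, which applies Fact \ref{fact:taylor}(b) without verifying $p\leq 1/2$; your explicit treatment of the $m\in\{2,3\}$ edge cases closes that small gap.
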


\noindent We can derive a relationship between latency and expected collision cost for fair algorithms.\smallskip

\noindent{ \textsc{Theorem} \ref{thm:static_lower}.} {\it
Let $\alpha$ and $\beta$ be constants such that $1/(2\tunableParam)\geq \alpha \geq 0$ and $\beta\geq 0$. Any fair algorithm for the wakeup problem that  has $O(\CollCost^{(1/2)+\alpha\tunableParam} \log^{\beta}n)$ latency has an expected total cost of collisions $\Omega(\CollCost^{(1/2)-\alpha\tunableParam}/\log^{\beta}n)$. } 
\begin{proof}
Let $\mathcal{A}$ be an algorithm that \whp in $\eta$ solves wakeup with $L\leq  d'\CollCost^{(1/2)+\alpha\tunableParam}\log^{\beta}n$ latency, where $d'>0$ is some constant. By Lemma \ref{l:fair-algo-succeeds}, $\mathcal{A}$ has total contention:
 \begin{align}
     \sum_{t=1}^L\con\geq 2 \label{eq:contention-sum}
 \end{align}
As we discussed, Lemma  \ref{lem:high-contention} implies that we need only consider that $\mathcal{A}$ has contention at most $2$ in every slot. Thus, we can  use Lemma \ref{lem:lower-bound-coll-prob} (noting that $n\geq 2$), which implies that the expected cost of collisions in a single slot $t$ is:
$$Pr(\text{collision in slot $t$})\,\CollCost=k\,\conSquare\,\CollCost$$ 
\noindent for some constant $k>0$. So, the expected total cost of collisions  is: 
\begin{align*}
   \sum_{t=1}^L k\, \conSquare\, \CollCost.
\end{align*}
\noindent Using Jensen's inequality for convex functions, we have:
\begin{align*}
    \CollCost k \sum_{t=1}^L \conSquare & \geq \CollCost k \frac{\left(\sum_{t=1}^L \con\right)^2}{L} \label{eq:jensen}\\
    & \geq \frac{\CollCost \,k\, 2^2 }{L}\\
    & =\Omega(\CollCost^{(1/2)-\alpha\tunableParam}/\log^{\beta}n) 
\end{align*}
 \noindent where the second line follows from Equation \ref{eq:contention-sum}, and the last line follows since $L\leq  d'\CollCost^{(1/2)+\alpha\tunableParam}\log^{\beta}n$.
\end{proof}

As evidence that our result is not far off from the asymptotic optimal, we note that any fair algorithms with latency $O(\CollCost^{\frac{1}{2}+2\tunableParam})$ must have an expected total cost of collisions that is $\Omega(\CollCost^{\frac{1}{2}-2\tunableParam})$, which is an  $O(\CollCost^{3\tunableParam})$-factor smaller than that achieved by \mainAlgorithmAcrynom.  \medskip

\section{Conclusion and Future Work}

Our work gives a preliminary theoretical exploration  addressing collision costs. For future work, we aim to prove more general lower bounds.  We also wish to derive results for the dynamic setting, where packets arrive over time rather than all being initially present.\medskip

\noindent{\bf Acknowledgements.} This research is supported by the National Science Foundation grants CNS-2210300 and CCF-2144410.



\begin{thebibliography}{20}


\ifx \showCODEN    \undefined \def \showCODEN     #1{\unskip}     \fi
\ifx \showISBNx    \undefined \def \showISBNx     #1{\unskip}     \fi
\ifx \showISBNxiii \undefined \def \showISBNxiii  #1{\unskip}     \fi
\ifx \showISSN     \undefined \def \showISSN      #1{\unskip}     \fi
\ifx \showLCCN     \undefined \def \showLCCN      #1{\unskip}     \fi
\ifx \shownote     \undefined \def \shownote      #1{#1}          \fi
\ifx \showarticletitle \undefined \def \showarticletitle #1{#1}   \fi
\ifx \showURL      \undefined \def \showURL       {\relax}        \fi
\providecommand\bibfield[2]{#2}
\providecommand\bibinfo[2]{#2}
\providecommand\natexlab[1]{#1}
\providecommand\showeprint[2][]{arXiv:#2}

\bibitem[Anderton et~al\mbox{.}(2021)]%
        {anderton:windowed}
\bibfield{author}{\bibinfo{person}{William~C. Anderton}, \bibinfo{person}{Trisha Chakraborty}, {and} \bibinfo{person}{Maxwell Young}.} \bibinfo{year}{2021}\natexlab{}.
\newblock \showarticletitle{Windowed Backoff Algorithms for {WiFi}: Theory and Performance under Batched Arrivals}.
\newblock \bibinfo{journal}{\emph{Distributed Computing}}  \bibinfo{volume}{34} (\bibinfo{year}{2021}), \bibinfo{pages}{367--393}.
\newblock


\bibitem[Banicescu et~al\mbox{.}(2024)]%
        {banicescu2024survey}
\bibfield{author}{\bibinfo{person}{Ioana Banicescu}, \bibinfo{person}{Trisha Chakraborty}, \bibinfo{person}{Seth Gilbert}, {and} \bibinfo{person}{Maxwell Young}.} \bibinfo{year}{2024}\natexlab{}.
\newblock \showarticletitle{A Survey on Adversarial Contention Resolution}.
\newblock \bibinfo{journal}{\emph{Comput. Surveys}} (\bibinfo{year}{2024}).
\newblock


\bibitem[Bar-Yehuda et~al\mbox{.}(1992)]%
        {bar1992time}
\bibfield{author}{\bibinfo{person}{Reuven Bar-Yehuda}, \bibinfo{person}{Oded Goldreich}, {and} \bibinfo{person}{Alon Itai}.} \bibinfo{year}{1992}\natexlab{}.
\newblock \showarticletitle{On the time-complexity of broadcast in multi-hop radio networks: An exponential gap between determinism and randomization}.
\newblock \bibinfo{journal}{\emph{J. Comput. System Sci.}} \bibinfo{volume}{45}, \bibinfo{number}{1} (\bibinfo{year}{1992}), \bibinfo{pages}{104--126}.
\newblock


\bibitem[Biswas et~al\mbox{.}(2024a)]%
        {biswas2024softening}
\bibfield{author}{\bibinfo{person}{Umesh Biswas}, \bibinfo{person}{Trisha Chakraborty}, {and} \bibinfo{person}{Maxwell Young}.} \bibinfo{year}{2024}\natexlab{a}.
\newblock \showarticletitle{Softening the Impact of Collisions in Contention Resolution}. In \bibinfo{booktitle}{\emph{International Symposium on Stabilizing, Safety, and Security of Distributed Systems}}. \bibinfo{pages}{398--416}.
\newblock


\bibitem[Biswas et~al\mbox{.}(2024b)]%
        {biswas2024softeningimpactcollisionscontention}
\bibfield{author}{\bibinfo{person}{Umesh Biswas}, \bibinfo{person}{Trisha Chakraborty}, {and} \bibinfo{person}{Maxwell Young}.} \bibinfo{year}{2024}\natexlab{b}.
\newblock \bibinfo{title}{Softening the Impact of Collisions in Contention Resolution}.
\newblock
\showeprint[arxiv]{2408.11275}~[cs.DC]
\urldef\tempurl%
\url{https://arxiv.org/abs/2408.11275}
\showURL{%
\tempurl}


\bibitem[Chang et~al\mbox{.}(2019)]%
        {chang:exponential-jacm}
\bibfield{author}{\bibinfo{person}{Yi-Jun Chang}, \bibinfo{person}{Tsvi Kopelowitz}, \bibinfo{person}{Seth Pettie}, \bibinfo{person}{Ruosong Wang}, {and} \bibinfo{person}{Wei Zhan}.} \bibinfo{year}{2019}\natexlab{}.
\newblock \showarticletitle{Exponential Separations in the Energy Complexity of Leader Election}.
\newblock \bibinfo{journal}{\emph{ACM Transactions on Algorithms}} \bibinfo{volume}{15}, \bibinfo{number}{4}, Article \bibinfo{articleno}{49} (\bibinfo{date}{Oct.} \bibinfo{year}{2019}), \bibinfo{numpages}{31}~pages.
\newblock


\bibitem[Farach-Colton et~al\mbox{.}(2006)]%
        {farach-colton:lower-bounds}
\bibfield{author}{\bibinfo{person}{Mart{\'i}n Farach-Colton}, \bibinfo{person}{Rohan~J. Fernandes}, {and} \bibinfo{person}{Miguel~A. Mosteiro}.} \bibinfo{year}{2006}\natexlab{}.
\newblock \showarticletitle{Lower Bounds for Clear Transmissions in Radio Networks}. In \bibinfo{booktitle}{\emph{Proceedings of the Latin American Symposium on Theoretical Informatics (LATIN)}}. \bibinfo{pages}{447--454}.
\newblock


\bibitem[Fineman et~al\mbox{.}(2016a)]%
        {fineman:contention}
\bibfield{author}{\bibinfo{person}{Jeremy~T. Fineman}, \bibinfo{person}{Seth Gilbert}, \bibinfo{person}{Fabian Kuhn}, {and} \bibinfo{person}{Calvin Newport}.} \bibinfo{year}{2016}\natexlab{a}.
\newblock \showarticletitle{Contention Resolution on a Fading Channel}. In \bibinfo{booktitle}{\emph{Proceedings of the ACM Symposium on Principles of Distributed Computing (PODC)}}. \bibinfo{pages}{155--164}.
\newblock


\bibitem[Fineman et~al\mbox{.}(2016b)]%
        {fineman:contention2}
\bibfield{author}{\bibinfo{person}{Jeremy~T. Fineman}, \bibinfo{person}{Calvin Newport}, {and} \bibinfo{person}{Tonghe Wang}.} \bibinfo{year}{2016}\natexlab{b}.
\newblock \showarticletitle{Contention Resolution on Multiple Channels with Collision Detection}. In \bibinfo{booktitle}{\emph{Proceedings of the ACM Symposium on Principles of Distributed Computing (PODC)}}. \bibinfo{pages}{175--184}.
\newblock


\bibitem[Ger{\'{e}}b{-}Graus and Tsantilas(1992)]%
        {Gereb-GrausT92}
\bibfield{author}{\bibinfo{person}{Mih{\'{a}}ly Ger{\'{e}}b{-}Graus} {and} \bibinfo{person}{Thanasis Tsantilas}.} \bibinfo{year}{1992}\natexlab{}.
\newblock \showarticletitle{Efficient Optical Communication in Parallel Computers}. In \bibinfo{booktitle}{\emph{Proceedings of the Annual ACM Symposium on Parallel Algorithms and Architectures (SPAA)}}. \bibinfo{pages}{41--48}.
\newblock


\bibitem[Gilbert et~al\mbox{.}(2021)]%
        {gilbert2021contention}
\bibfield{author}{\bibinfo{person}{Seth Gilbert}, \bibinfo{person}{Calvin Newport}, \bibinfo{person}{Nitin Vaidya}, {and} \bibinfo{person}{Alex Weaver}.} \bibinfo{year}{2021}\natexlab{}.
\newblock \showarticletitle{Contention resolution with predictions}. In \bibinfo{booktitle}{\emph{Proceedings of the ACM Symposium on Principles of Distributed Computing (PODC)}}. \bibinfo{pages}{127--137}.
\newblock


\bibitem[Greenberg and Leiserson(1985)]%
        {GreenbergL85}
\bibfield{author}{\bibinfo{person}{Ronald~I. Greenberg} {and} \bibinfo{person}{Charles~E. Leiserson}.} \bibinfo{year}{1985}\natexlab{}.
\newblock \showarticletitle{Randomized Routing on Fat-Trees}. In \bibinfo{booktitle}{\emph{Proceedings of the IEEE Symposium on Foundations of Computer Science (FOCS)}}. \bibinfo{pages}{241--249}.
\newblock


\bibitem[Jurdzi{\'n}ski and Stachowiak(2002)]%
        {jurdzinski2002probabilistic}
\bibfield{author}{\bibinfo{person}{Tomasz Jurdzi{\'n}ski} {and} \bibinfo{person}{Grzegorz Stachowiak}.} \bibinfo{year}{2002}\natexlab{}.
\newblock \showarticletitle{Probabilistic algorithms for the wakeup problem in single-hop radio networks}. In \bibinfo{booktitle}{\emph{Proceedings of the International Symposium on Algorithms and Computation}}. \bibinfo{pages}{535--549}.
\newblock


\bibitem[Kushilevitz and Mansour(1993)]%
        {kushilevitz1993omega}
\bibfield{author}{\bibinfo{person}{Eyal Kushilevitz} {and} \bibinfo{person}{Yishay Mansour}.} \bibinfo{year}{1993}\natexlab{}.
\newblock \showarticletitle{An {$\Omega$(D$\log$(N/D))} lower bound for broadcast in radio networks}. In \bibinfo{booktitle}{\emph{Proceedings of the Annual ACM Symposium on Principles of Distributed Computing (PODC)}}. \bibinfo{pages}{65--74}.
\newblock


\bibitem[Kushilevitz and Mansour(1998)]%
        {kushilevitz1998omega}
\bibfield{author}{\bibinfo{person}{Eyal Kushilevitz} {and} \bibinfo{person}{Yishay Mansour}.} \bibinfo{year}{1998}\natexlab{}.
\newblock \showarticletitle{An {$\Omega$(D$\log$(N/D))} lower bound for broadcast in radio networks}.
\newblock \bibinfo{journal}{\emph{SIAM journal on Computing}} \bibinfo{volume}{27}, \bibinfo{number}{3} (\bibinfo{year}{1998}), \bibinfo{pages}{702--712}.
\newblock


\bibitem[Moscibroda and Wattenhofer(2006)]%
        {moscibroda2006complexity}
\bibfield{author}{\bibinfo{person}{Thomas Moscibroda} {and} \bibinfo{person}{Roger Wattenhofer}.} \bibinfo{year}{2006}\natexlab{}.
\newblock \showarticletitle{The Complexity of Connectivity in Wireless Networks.}. In \bibinfo{booktitle}{\emph{Proceedings of the IEEE Conference on Computer Communications (INFOCOM)}}. \bibinfo{pages}{1--13}.
\newblock


\bibitem[Newport(2014a)]%
        {newport2014radio}
\bibfield{author}{\bibinfo{person}{Calvin Newport}.} \bibinfo{year}{2014}\natexlab{a}.
\newblock \showarticletitle{Radio network lower bounds made easy}. In \bibinfo{booktitle}{\emph{Proceedings of the International Symposium on Distributed Computing (DISC)}}. \bibinfo{pages}{258--272}.
\newblock


\bibitem[Newport(2014b)]%
        {newport:radio-journal}
\bibfield{author}{\bibinfo{person}{Calvin Newport}.} \bibinfo{year}{2014}\natexlab{b}.
\newblock \showarticletitle{Radio Network Lower Bounds Made Easy}. In \bibinfo{booktitle}{\emph{Distributed Computing}}, \bibfield{editor}{\bibinfo{person}{Fabian Kuhn}} (Ed.). \bibinfo{pages}{258--272}.
\newblock


\bibitem[Willard(1986)]%
        {willard:loglog}
\bibfield{author}{\bibinfo{person}{Dan~E. Willard}.} \bibinfo{year}{1986}\natexlab{}.
\newblock \showarticletitle{Log-logarithmic Selection Resolution Protocols in a Multiple Access Channel}.
\newblock \bibinfo{journal}{\emph{SIAM J. Comput.}} \bibinfo{volume}{15}, \bibinfo{number}{2} (\bibinfo{date}{May} \bibinfo{year}{1986}), \bibinfo{pages}{468--477}.
\newblock


\bibitem[Zhang(2006)]%
        {zhang2006routing}
\bibfield{author}{\bibinfo{person}{Zhensheng Zhang}.} \bibinfo{year}{2006}\natexlab{}.
\newblock \showarticletitle{Routing in intermittently connected mobile ad hoc networks and delay tolerant networks: overview and challenges}.
\newblock \bibinfo{journal}{\emph{IEEE Communications Surveys \& Tutorials}} \bibinfo{volume}{8}, \bibinfo{number}{1} (\bibinfo{year}{2006}), \bibinfo{pages}{24--37}.
\newblock


\end{thebibliography}


\end{document}